\newenvironment{proof-sketch}{\noindent{\emph{Proof (Sketch).}}}{\bigskip}
\newcommand{\myi}{(\emph{i})\xspace}
\newcommand{\myii}{(\emph{ii})\xspace}
\newcommand{\myiii}{(\emph{iii})\xspace}
\newcommand{\myiv}{(\emph{iv})\xspace}
\newcommand{\A}{\mathcal{A}} \newcommand{\B}{\mathcal{B}}
\newcommand{\C}{\mathcal{C}}
\newcommand{\I}{\mathcal{I}} 
\newcommand{\M}{\mathcal{M}} 
\renewcommand{\O}{\mathcal{O}} 
\renewcommand{\S}{\mathcal{S}} \newcommand{\T}{\mathcal{T}}
\newcommand{\D}{\mathcal{D}}
\newcommand{\Lora}{\Longrightarrow}
\newcommand{\Lola}{\Longleftarrow}
\newcommand{\tup}[1]{\langle #1\rangle}            
\newcommand{\dlliteaid}{\textit{DL-Lite}_{\A ,id}\xspace}
\newcommand{\PTIME}{\textsc{PTime}\xspace}
\newcommand{\EXPTIME}{\textsc{ExpTime}\xspace}
\newcommand{\NP}{\textrm{NP}\xspace}
\newcommand{\MOD}{\textit{MOD}}
\newcommand{\cert}{\mathit{cert}}
\newcommand{\sparql}{\texttt{SPARQL}\xspace}
\renewcommand{\D}{\texttt{D}\xspace}
\renewcommand{\T}{\texttt{T}\xspace}
\newcommand{\sem}{\mathit{sem}}
\newcommand{\dom}{\mathit{dom}}
\newcommand{\nulls}{\mathit{null}}
\newcommand{\tuplehead}{\mathit{tup}}
\newcommand{\sat}{\mathit{sat}}
\newcommand{\pr}{\mathit{pr}}
\title{Preliminary results on Ontology-based Open Data Publishing}
\author{Gianluca Cima}
\institute{
  Dipartimento di Ingegneria Informatica, Automatica e Gestionale\\
  Sapienza Universit\`a di Roma\\
  \href{mailto:cima@diag.uniroma1.it}{cima@diag.uniroma1.it}
  }
\begin{document}
\nocite{*}
\maketitle
\begin{abstract}
  Despite the current interest in Open Data publishing, a formal and
  comprehensive methodology supporting an organization in deciding
  which data to publish and carrying out precise procedures for
  publishing high-quality data, is still missing. In this paper we
  argue that the Ontology-based Data Management paradigm can provide a
  formal basis for a principled approach to publish high-quality,
  semantically annotated Open Data. We describe two main approaches to
  using an ontology for this endeavor, and then we present some
  technical results on one of the approaches, called bottom-up, where
  the specification of the data to be published is given in terms of
  the sources, and specific techniques allow deriving suitable
  annotations for interpreting the published data under the light of
  the ontology.
\end{abstract}
\section{Introduction} \label{sec:introduction}

In many aspects of our society there is growing awareness and consent
on the need for data-driven approaches that are resilient, transparent
and fully accountable. But to achieve a data-driven society, it is
necessary that the data needed for public goods are readily
available. Thus, it is no surprising that in recent years, both public
and private organizations have been faced with the issue of publishing
Open Data, in particular with the goal of providing data consumers
with suitable information to capture the semantics of the data they
publish.  Significant efforts have been devoted to defining guidelines
concerning the management and publication of Open Data. Notably, the
W3C\footnote{World Wide Web Consortium: \url{https://www.w3.org/}} has
formed a working group, whose objective is the release of a first
draft on Open Data Standards\footnote{Data on the Web Best Practice:
  \url{https://www.w3.org/TR/dwbp/}}. The focus of the document are
areas such as metadata, data formats, data licenses, data quality,
etc., which are treated in very general terms, with no reference to
any specifical technical methodology. More generally, although there
are several works on platforms and architectures for publishing Open
Data, there is still no formal and comprehensive methodology
supporting an organization in \myi deciding which data to publish, and
\myii carrying out precise procedures for publishing and documenting
high-quality data. One of the reasons of this lack of formal methods
is that the problem of Open Data Publishing is strictly related to the
problem of managing the data within an organization. Indeed, a
necessary prerequisite for an organization for publishing relevant and
meaningful data is to be able to manage, maintain and document its own
information system. The recent paradigm of Ontology-based Data
Management (OBDM)~\cite{Lenz11} (used and experimented in practice in
the last years, see, e.g.,~\cite{ACCG*14}) is an attempt to provide
the principles and the techniques for addressing this challenge. An
OBDM system is constituted by an ontology, the data sources forming
the information system, and the mapping between the ontology and the
sources. The ontology is a formal representation of the domain
underlying the information system, and the mapping is a precise
specification of the relationship between the data at the sources and
the concepts in the ontology.

In this paper we argue that the OBDM paradigm can provide a formal basis for
a principled approach to publish high-quality, semantically annotated Open Data. The 
most basic task in Open Data is the extraction of the \emph{correct content} for the dataset(s) to 
be published, where by ``content'' we mean both the extensional information (i.e., facts 
about the domain of interest) conveyed by the dataset, and the intensional knowledge 
relevant to document such facts (e.g., concepts that intensionally describe facts), and 
``correct'' means that the aspect of the domain captured by the dataset is coherent with 
a requirement formally expressed in the organization.

Current practices for publishing Open Data focus essentially on
providing extensional information (often in very simple forms, such as
CSV files), and they carry out the task of documenting data mostly by
using metadata expressed in natural languages, or in terms of record
structures. As a consequence, the semantics of datasets is not
formally expressed in a machine-readable form. Conversely, OBDM opens
up the possibility of a new way of publishing data, with the idea of
annotating data items with the ontology elements that describe them in
terms of the concepts in the domain of the organization. When an OBDM
is available in an organization, an obvious way to proceed to Open
Data publication is as follows: \myi express the dataset to be
published in terms of a \sparql query over the ontology, \myii compute
the certain answers to the query, and \myiii publish the result of the
certain answer computation, using the query expression and the
ontology as a basis for annotating the dataset with suitable metadata
expressing its semantics. We call such method \emph{top-down}. Using
this method, the ontology is the heart of the task: it is used for
expressing the content of the dataset to be published (in terms of a
query), and it is used, together with the query, for annotating the
published data.

Unfortunately, in many organizations (for example, in Public Administrations) it may be the case that people are not ready yet to manage their information systems through  the OBDM paradigm. In these cases, the bottom-up approach could be more appropriate.
For example, in the Italian Public Administration system, it is very unlikely that local administration people are able to express their queries over the ontology using \sparql. Typically, the ontology and the mapping have been designed by third parties, with no or little involvement with IT people responsible of the local administration information system.  In other words, these people probably cannot follow the top-down approach, and they are more confident to express the specification of the dataset to be published directly in terms of the source structures (i.e., the relational tables in their databases), or, more generally, in terms of a view over the sources. But how can we automatically publish both the content and the semantics of the
dataset if its specification is given in terms of the data sources? We
argue that we can achieve this goal by following what we call the
\emph{bottom-up} approach: the organization expresses its publishing
requirement as a query over the sources, and, by using the ontology
and the mapping, a suitable algorithm computes the corresponding query
over the ontology. With such query at hand, we have reduced the
problem in such a way that the top-down approach can now be followed,
and the required data can be published according to the method
described above. So, at the heart of the bottom-up approach there is a
conceptual issue to address:

\begin{quote}
  \emph{"Given a query $Q$ over the sources, which is the query over
    the ontology that characterizes $Q$ at best (independently from
    the current source database)?"}
\end{quote}
Note that the answer to this question is relevant also for other tasks
related to the management of the information system, e.g., the task of
explaining the semantics of the various data sources within the
organization. The question implicitly refers to a sort of reverse
engineering problem, which is a novel aspect in the investigation of
both OBDM and data integration. Indeed, most of (if not all) the
literature about managing data sources through an ontology (see,
e.g.,~\cite{PLCD*08,CDLL*09}), or more generally, about data
integration~\cite{Lenz02} assume that the user query is expressed over
the global schema, and the goal is to find a \emph{rewriting} (i.e., a
query over the source schema) that captures the original query in the
best way, independently from the current source database. Here, the
problem is reversed, because we start with a source query and we aim
at deriving a corresponding query over the ontology, called a
source-to-target rewriting.

In this paper we study the above described bottom-up approach, and
provide the following contributions.
\begin{itemize}
\item We introduce the concept of \emph{source-to-target rewriting}
  (see Section \ref{sec:rewriting}), the main technical notion underlying the bottom-up
  approach, and we describe two computation problems related to it,
  namely the recognition problem, and the finding problem. The former
  aims at checking whether a query over the ontology is a
  source-to-target rewriting of a given query over the sources, taking
  into account the mapping between the sources and the ontology. The
  latter aims at computing a suitable source-to-target rewriting of a
  given source query, with respect to the mapping.
\item We discuss two different semantics for source-to-target
  rewritings, one based on the logical models of the OBDM
  specification, and one based on certain answers. The former is
  somehow the natural choice, given the first-order semantics behind
  OBDM. The latter is a significant alternative, that may better
  capture the intuition of a user who is accustomed to think of query
  semantics in terms of certain answers.
\item We show that, although the ideal notion is the one of ``exact''
  source-to-target rewriting, it is important to resort to
  approximations to exact rewriting when exactness cannot be
  achieved. For this reason, we introduce the notion of sound and
  complete source-to-target rewritings.
\item For the case of complete source-to-target rewritings, we present
  algorithms both for the recognition (Section \ref{sec:RecognitionProblem}), and for the finding
  (Section \ref{sec:finding}) problem, in particular for the setting where the
  ontology is expressed in $\dlliteaid$, and the queries involved in
  the specification are conjunctive queries.
\end{itemize}

\section{Preliminaries}\label{sec:preliminaries}

We assume familiarity with classical databases~\cite{AbHV95}, Description Logics~\cite{BCMNP03}, and the OBDM paradigm. In this section, we \myi review the most basic notions of non-ground instances, and their correlation with conjunctive queries; \myii briefly discuss the chase of a possible non-ground instance; \myiii discuss the relevant aspects of notation we use in the following regarding the OBDM paradigm.

For a possible non-ground instance $\D$, we assume that each value in $\dom(\D)$, i.e., the set of values occurring in $\D$, comes from the union of two fixed disjoint infinite sets: the set \emph{Const} of all \emph{constants}, and the set $\mathit{Null_{\D}}$ of all \emph{labeled nulls}. We also let $\nulls(\D):=\dom(\D)\cap\mathit{Null_{\D}}$. In particular, each labeled null in a non-ground instance is treated as an unknown value (and hence, an \emph{incomplete information}), rather than to a non-existent value~\cite{Zani82}. Thus, a non-ground instance represents a number of ground instances obtained by assigning constants to each labeled null. More precisely, let $\D$ be a non-ground instance, and $v$ be a mapping $v:\emph{null}(\D)\rightarrow\emph{Const}$. Then, $v$ is called a \emph{valuation} of $\D$, and we indicate, with $v(\D)$, the ground instance obtained from $\D$ by replacing elsewhere each labeled null $x\in\D$ with $v(x)$. We also extend this to tuples, that is, given a tuple $\overline{u}=(u_1,...,u_n)$ of both constants and labeled nulls, with $v(\overline{u})$ we indicate the tuple $(v'(u_1),...,v'(u_n))$, where $v'(u_i)=u_i$ if $u_i$ is a constant; otherwise ($u_i$ is a labeled null), $v'(u_i)=v(u_i)$. Given an instance $\D$ it is possible to construct in linear time a boolean CQ $q_{\D}$ that fully captures it, and vice versa. We also let $q_{\D}(\overline{x})$ denoting the transformation of $q_{\D}$ by removing the existential quantification of the variables $\overline{x}$ in $q_{\D}$. Moreover, given a non-boolean CQ $q$ (with $\overline{x}$ as distinguished variables), we associate to it the instance $\D_{q}$ by considering the variables in $\overline{x}$ as if they were existentially quantified. For ease of presentation, we extend CQs to allow also queries of the form $\{\overline{x}\ |\ \bot(\overline{x})\}$ and $\{\overline{x}\ |\ \top(\overline{x})\}$, with their usual meaning. We also denote with $\tuplehead(q)$ the tuple composed by the terms in \emph{head} of $q$.

Given a source schema $\S$; a target schema $T$; a set $\M$ of \emph{st-tgds} (i.e., assertions of the form $\forall\overline{x},\overline{y}(\phi(\overline{x},\overline{y})\rightarrow\exists\overline{z}\varphi(\overline{x},\overline{z}))$, where $\phi$ is a CQ over $\S$, and $\varphi$ is a CQ over $\T$); and a set $\Sigma_{t}$ of \emph{egds} (i.e., assertions of the form $\forall\overline{x}(\phi_{\T}(\overline{x})\rightarrow(x_1=x_2))$, where $\phi_{\T}$ is a CQ over $\T$, and $x_1,x_2$ are among the variables in $\overline{x}$), the \emph{chase} procedure of a possibly non-ground source instance $\D$ consists in: \myi the chase of $\D$ w.r.t. $\M$, where, for every st-tgd $\phi(\overline{x},\overline{y})\rightarrow\exists\overline{z}\varphi(\overline{x},\overline{z})$ in $\M$ and for every pair of tuples $(\overline{a},\overline{b})$ such that $\D\models\phi(\overline{a},\overline{b})$, there is the introduction of new facts in the instance $J$ of the target schema $T$ so that $\varphi(\overline{a},\overline{u})$ holds, where $\overline{u}$ consists in a fresh tuple of distinct labeled nulls coming from an infinite set $\mathit{Null_{\Sigma}}$ disjoint from $\mathit{Null_{\D}}$; \myii the chase of $J$ w.r.t. $\Sigma_{t}$, where, for every egd $\forall\overline{x}(\phi_{\T}(\overline{x})\rightarrow(x_1=x_2))$ and for every tuple $\overline{a}$ such that $J\models\phi_{\T}(\overline{a})$ and $a_1\neq a_2$, we equate the two terms. Equating $a_1$ with $a_2$ means choosing one of the two so that the other is replaced elsewhere in $J$ by the one chosen. In particular, if one is a labeled null and the other is a constant, then the chase choose the constant; if both are labeled nulls, one coming from $\mathit{Null_{\D}}$ and the other from $\mathit{Null_{\Sigma}}$, it always choose the one coming from $\mathit{Null_{\D}}$; if both are constants, then the chase fail. Moreover, with $\psi$ we denote the set of equalities applied by the chase of $J$ w.r.t. a set of egds on variables coming from $\mathit{Null_{\D}}$. This can be done by keeping track of the substitution applied by the chase. For example, if the chase equates the variable $y\in\mathit{Null_{\D}}$ with the variable $x\in\mathit{Null_{\D}}$, and then equates the variable $z\in\mathit{Null_{\D}}$ with the variable $w\in\mathit{Null_{\D}}$, and then $w$ with the constant $a$, given the tuple $(x,y,z,w)$, $\psi(x,y,z,w)$ indicates the tuple $(x,x,a,a)$. Note that, we can compute the certain answers of a boolean union of CQs (UCQ) $q$ with at most one inequality per disjunct by splitting $q$ as a boolean UCQ $q^{1\neq}$ with exactly one inequality per disjunct, and a boolean UCQ $q^{0\neq}$ with no inequality per disjunct. The key idea is that the negation of $q^{1\neq}$ consists in a set of egds, hence, the certain answers of $q$ can be computed by applying the chase procedure over the instance $J$ (i.e., the instance produced by the chase of $\C$ w.r.t. $\M$ and $\Sigma_{t}$) w.r.t. $\neg q^{1\neq}$, where, if the chase fail then the answer is $\mathit{true}$; otherwise, if the instance $J'$ produced satisfy one of the conjunctive query in $q^{0\neq}$, then the answer is $\mathit{true}$, else the answer is $\mathit{false}$. We refer to~\cite{FKMP03} for more details.\\
Given an OBDM specification $\I=\tup{\O,\M,\S}$, where $\O$ is a TBox, and $\M$ is a set of st-tgds, and given a non-ground source instance $\D$ for $\S$, and a set of egds $\Sigma_{t}$, we denote with $\A_{\D,\Sigma}$, where $\Sigma=\M\cup\Sigma_{t}$, the ABox computed as follows: \myi chase the non-ground source instance $\D$ w.r.t. $\Sigma$; \myii \emph{freeze} the instance (or equivalently, the ABox with variables) obtained, i.e., variables in this instance are now considered as constant. Note that, such ABox $\A_{\D,\Sigma}$ may also not exists due to the failing of the chase, in this case, we denote $\A_{\D,\Sigma}$ with the symbol $\bot$.

For an OBDM specification $\I=\tup{\O,\M,\S}$, and for a source database $\C$ for $\I$ (i.e., a ground instance over the schema $\S$), we denote by $\sem^{\C}(\I)$ the set of \emph{models} $\B$ for $\I$ relative to $\C$ such that: \myi $\B\models\O$; \myii $(\B,\C)\models\M$. Given a query $q$ over $\O$, we denote by $\cert(q,\I,\C)$ the set of \emph{certain answers} to $q$ in $\I$ relative to $\C$. It is defined as: $\cert(q,\I,\C)=\bigcap\{q^{\B}\ |\ \B\in\sem^{\C}(\I)\}$ if $\sem^{\C}(\I)\neq\emptyset$; otherwise, $\cert(q,\I,\C)=\mathit{AllTup}(q,\C)$, where $\mathit{AllTup}(q,\C)$ is the set of all possible tuples of constants in $\C$ whose arity is the one of the query $q$. Furthermore, given a $\dlliteaid$~\cite{CDLL*09} TBox $\O$ and a $\dlliteaid$ ABox $\A$ we are able to: \myi check whether $\tup{\O,\A}$ is satisfiable by computing the answers of a suitable boolean query $Q_{\sat}$ (a UCQ with at most one inequality per disjunct) over the ABox $\A$ considered as a relational database. We see $Q_{\sat}$ as the union of $Q_{\sat}^{0\neq}$ (the UCQ containing every disjunct not comprising inequalities in $Q_{\sat}$) and $Q_{\sat}^{1\neq}$ (the UCQ containing every disjunct comprising inequalities in $Q_{\sat}$); \myii compute the \emph{certain answers} to a UCQ $Q_{g}$ over a satisfiable $\tup{\O,\A}$, denoted with $\cert(Q_{g},\tup{\O,\A})$, by producing a \emph{perfect reformulation} (denoted as a function $\pr(\cdot)$) of such query, and then computing the answers of $\pr(Q_{g})$ over the ABox $\A$ considered as a relational database. See~\cite{CDLLR07} for more details.

\section{The notion of source-to-target rewriting}\label{sec:rewriting}

In what follows, we implicitly refer to \myi an OBDM specification
$\I=\tup{\O,\M,\S}$; \myii a query $Q_{s}$ over the source schema
$\S$; \myiii a query $Q_{g}$ over the ontology $\O$.  

As we said in the introduction, there are at least two different ways to
formally define a source-to-target rewriting (\emph{s-to-t rewriting}
in the following) for each of the three variants, namely ``exact'',
``complete'', and ``sound''. The first one is captured by the
following definition.

\begin{definition}
  \normalfont $Q_{g}$ is a \emph{complete} (resp., \emph{sound},
  \emph{exact}) \emph{s-to-t rewriting} of $Q_{s}$ with respect to
  $\I$ under the \emph{model-based semantics}, if for each source
  database $\C$ and for each model $\B\in\sem^{\C}(\I)$, we have that
  $Q_{s}^{\C}\subseteq Q_{g}^{\B}$ (resp., $Q_{g}^{\B}\subseteq
  Q_{s}^{\C}$, $Q_{g}^{\B}=Q_{s}^{\C}$).
\end{definition}

Intuitively, a complete s-to-t rewriting of $Q_{s}$ w.r.t. $\I$ under
the model-based semantics is a query over $\O$ that, when evaluated
over a model $\B\in\sem^{\C}(\I)$ for a source database $\C$, returns
all the answers of the evaluation of $Q_{s}$ over $\C$. In other
words, for every source database $\C$, the query $Q_{g}$ over $\O$
captures all the semantics that $Q_{s}$ expresses over $\C$. Similar
arguments hold for the notions of sound and exact s-to-t rewriting
under this semantics. Moreover, from the formal definition of
source-to-target rewriting and the usual definition of
target-to-source rewriting (simply called rewriting) used in data
integration, it is easy to see that $Q_{g}$ is a complete (resp.,
sound) source-to-target rewriting of $Q_{s}$ w.r.t. $\I$ under the
model-based semantics, if and only if $Q_{s}$ is a sound (resp.,
complete) rewriting of $Q_{s}$ w.r.t. $\I$, implying that, $Q_{g}$ is
an exact source-to-target rewriting of $Q_{s}$ w.r.t. $\I$ under the
model-based semantics, if and only if $Q_{s}$ is an exact rewriting of
$Q_{g}$ w.r.t. $\I$.

The second possible way to formally define a source-to-target
rewriting is as follows.
\begin{definition}
  \normalfont $Q_{g}$ is a \emph{complete} (resp., \emph{sound},
  \emph{exact}) \emph{s-to-t rewriting} of $Q_{s}$ with respect to
  $\I$ under the \emph{certain answers-based semantics}, if for each
  source database $\C$ such that $\sem^{\C}(\I)\neq\emptyset$, we have
  that $Q_{s}^{\C}\subseteq \cert(Q_{g},\I,\C)$ (resp.,
  $\cert(Q_{g},\I,\C)\subseteq Q_{s}^{\C}$,
  $Q_{s}^{\C}=\cert(Q_{g},\I,\C)$).
\end{definition}
In this new semantics, in order to capture a query $Q_{s}$ over $\S$,
we resort to the notion of certain answers. Indeed, a complete s-to-t
rewriting of $Q_{s}$ w.r.t. $\I$ under the certain answers-based
semantics is a query over $\O$ such that, when we compute its certain
answers for a source database $\C$, we get all the answers of the
evaluation of $Q_{s}$ over $\C$. As before, similar arguments hold
for the notions of sound and exact s-to-t rewriting under this
semantics. Note also the strong correspondence between the exact
s-to-t rewriting under the certain answers-based semantics and the notion of \emph{perfect rewriting}. We remind that a perfect rewriting of $Q_{g}$
w.r.t. $\I$ is a query $Q_{s}$ over $\S$ that computes
$\cert(Q_{g},\I,\C)$ for every source database $\C$ such that
$\sem^{\C}(\I)\neq\emptyset$~\cite{CDLV05}. Indeed, we have that
$Q_{g}$ is an exact s-to-t rewriting of $Q_{s}$ w.r.t. $\I$ under the
certain answers-based semantics if and only if $Q_{s}$ is a perfect
rewriting of $Q_{g}$ w.r.t. $\I$.  Note that the above observations
imply that the two semantics are indeed different, since it is
well-known that the two notions of exact rewriting and perfect
rewriting of $Q_{g}$ w.r.t. $\I$ are different. The difference between
the two semantics is confirmed by the following example.
\begin{example}
  $\O:=\emptyset$ (i.e., no TBox assertions in $\O$); $\S$ contains a
  binary relation $r_{1}$ and a unary relation $r_{2}$;
  $\M:=\{\forall x\forall y(r_{1}(x,y)\rightarrow\textsf{G}(x,y)), \forall x(r_{2}(x)\rightarrow\exists
  Y.\textsf{G}(x,Y))\}$; $Q_{s}:=\left\{(w)\ |\ \exists
    Z.r_{1}(Z,w)\right\}$; $Q_{g}:=\left\{(w)\ |\ \exists
    Z.\textsf{G}(Z,w)\right\}$.

  It is easy to see that $Q_{g}$ is a sound s-to-t rewriting of
  $Q_{s}$ w.r.t. $\I$ under the certain answers-based semantics (more
  precisely, it is an exact s-to-t rewriting of $Q_{s}$ w.r.t. $\I$
  under such semantics), while it is not sound under the model-based
  semantics. In fact, for the source database $\C$ with
  $r_{1}^{\mathcal{C}}=\{(a,b)\}$ and $r_{2}^{\mathcal{C}}=\{(c)\}$,
  and for the model $\mathcal{B}$ with
  $\textsf{G}^{\mathcal{B}}=\{(a,b),(c,d)\}$, we have
  $\B\in\sem^{\C}(\I)$, and $Q_{g}^{\B}\not\subseteq Q_{s}^{\C}$. \qed
\end{example}
Intuitively, for the sound case, the model-based semantics is too
strong, in the sense that under such semantics, a model $\B$ may
contain not only facts depending on how data in the source $\C$ are
linked to $\O$ through $\M$, but additionally arbitrary facts, with
the only constraint of satisfying $\O$. One might think that, in
order to address this issue, it is sufficient to resort to a sort of
minimizations of the models of $\O$. Actually, the above example shows
that, even if we restrict the set of models to the set of minimal
models (i.e., models $\B$ such that \myi $\B\in\sem^{\C}(\I)$ and
\myii there is no model $\B'\in\sem^{\C}(\I)$ such that
$\B'\subset\B$), and adopt a semantics like the model-based one
but restricted to the set of minimal models, $Q_{g}$ is still not a
sound s-to-t rewriting (this can be seen considering that the target
database $\B$ defined earlier is a minimal model).

Observe that the above considerations show the difference in the two
semantics by referring to sound and exact s-to-t rewritings. It is
interesting to ask whether the difference shows up when restricting
our attention to complete rewritings. The following proposition deals
with this question.

\begin{proposition}
  $Q_{g}$ is a complete s-to-t rewriting of $Q_{s}$ with respect to
  $\I$ under the model-based semantics if and only if it is so under
  the certain answers-based semantics.
\end{proposition}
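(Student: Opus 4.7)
The plan is to derive both directions directly from the definition of certain answers as the intersection $\cert(Q_g,\I,\C)=\bigcap\{Q_g^{\B}\mid \B\in\sem^{\C}(\I)\}$, valid whenever $\sem^{\C}(\I)\neq\emptyset$. No sophisticated machinery should be needed; the content of the claim is essentially that universal quantification over models commutes with intersection.

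For the forward direction (model-based implies certain answers-based), I would fix an arbitrary source database $\C$ with $\sem^{\C}(\I)\neq\emptyset$. By hypothesis, $Q_s^{\C}\subseteq Q_g^{\B}$ for every $\B\in\sem^{\C}(\I)$. Intersecting over all such $\B$ yields $Q_s^{\C}\subseteq \bigcap_{\B\in\sem^{\C}(\I)}Q_g^{\B}=\cert(Q_g,\I,\C)$, which is exactly the certain answers-based completeness condition.

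For the converse (certain answers-based implies model-based), I would fix an arbitrary source database $\C$ and an arbitrary $\B\in\sem^{\C}(\I)$. The existence of $\B$ ensures $\sem^{\C}(\I)\neq\emptyset$, so the hypothesis gives $Q_s^{\C}\subseteq \cert(Q_g,\I,\C)$. Since $\cert(Q_g,\I,\C)$ is the intersection of $Q_g^{\B'}$ over all $\B'\in\sem^{\C}(\I)$, it is contained in $Q_g^{\B}$, hence $Q_s^{\C}\subseteq Q_g^{\B}$. For source databases $\C$ with $\sem^{\C}(\I)=\emptyset$, the model-based condition is vacuously satisfied since there is no $\B$ to test against, so nothing further needs to be checked.

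There is essentially no hard part: the statement is a reformulation of the set-theoretic identity that $X\subseteq\bigcap_i Y_i$ iff $X\subseteq Y_i$ for every $i$, once one accounts for the edge case $\sem^{\C}(\I)=\emptyset$, which the certain answers-based definition explicitly excludes and the model-based definition handles vacuously. The only point worth stressing in the write-up is that, unlike in the sound and exact cases where $\cert(Q_g,\I,\C)$ can be a proper subset of each $Q_g^{\B}$ (so the two semantics genuinely diverge, as the preceding example witnesses), in the complete case the inequality points in the direction under which intersection and universal quantification are interchangeable.
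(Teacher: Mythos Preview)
Your proposal is correct and takes essentially the same approach as the paper: both arguments reduce to the set-theoretic fact that $X\subseteq\bigcap_i Y_i$ iff $X\subseteq Y_i$ for all $i$, together with the observation that the $\sem^{\C}(\I)=\emptyset$ case is vacuous for the model-based definition. The only cosmetic difference is that the paper phrases the certain-answers-to-model-based direction contrapositively, whereas you argue it directly; the content is identical.
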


\begin{proof-sketch}
One direction is trivial. Indeed, when $Q_{g}$ is a complete s-to-t rewriting of $Q_{s}$ with respect to $\I$ under the model-based semantics, by definition of certain answers, for each source database $\C$ such that $\sem^{\C}(\I)\neq\emptyset$ we have that $Q_{s}^{\C}\subseteq \cert(Q_{g},\I,\C)$. For the other direction, suppose that $Q_{g}$ is not a complete s-to-t rewriting of $Q_{s}$ w.r.t. $\I$ under the model-based semantics. It follows that, there exists a source database $\C$ and a model $\B\in\sem^{\C}(\I)$ such that $Q_{s}^{\C}\not\subseteq Q_{g}^{\B}$, implying that, $Q_{s}^{\C}\not\subseteq \cert(Q_{g},\I,\C)$, which, in turn, implies that $Q_{g}$ is not a complete s-to-t rewriting of $Q_{s}$ w.r.t. $\I$ under the certain answers-based semantics. \qed
\end{proof-sketch}

Obviously, the query over the ontology which captures at best a given
query $q$ over the source schema is the exact s-to-t rewriting of
$q$. However, the following example shows that even for very simple
OBDM specifications, an exact s-to-t rewriting of even trivial queries,
may not exist.
\begin{example} \label{ex:NoExact} $\O:=\emptyset$ (i.e., no TBox
  assertions in $\O$); $\S$ contains two unary relations
  $\mathit{man}$ and $\mathit{woman}$;
  $\M:=\{\forall x(\mathit{man}(x)\rightarrow\textsf{Person}(x)),
  \forall x(\mathit{woman}(x)\rightarrow\textsf{Person}(x))\}$; $Q_{s}:=\left\{(x)\
    |\ \mathit{woman}(x)\right\}$.

  It is possible to show that the only sound s-to-t rewriting of
  $Q_{s}$ w.r.t. $\I$ under both semantics is the query
  $Q_{g}:=\left\{ (x)\ |\ \bot(x)\right\}$, which is obviously not a
  complete s-to-t rewriting of $Q_{s}$ w.r.t. $\I$ neither under the
  model-based semantics, nor under the certain answers-based
  semantics. On the other hand, the most immediate and intuitive
  complete s-to-t rewriting of $Q_{s}$ w.r.t. $\I$ is the query
  $Q'_{g}:=\left\{ (x)\ |\ \textsf{Person}(x)\right\}$. Furthermore,
  as we will see in Section \ref{sec:finding}, this query is an
  ``optimal'' complete s-to-t rewriting of $Q_{s}$ w.r.t. $\I$, where
  the term \emph{optimal} will be precisely defined. \qed 
\end{example}

As we said in the introduction, in the rest of this paper we focus on
complete s-to-t-rewritings. In particular, we will address both the
recognition problem (see Section \ref{sec:RecognitionProblem}), and
the finding problem (see Section \ref{sec:finding}) in a specific
setting, characterized as follows:
\begin{itemize}
\item The ontology $\O$ in an OBDM specification $\I=\tup{\O,\M,\S}$
  is expressed as a TBox in $\dlliteaid$.
\item The mapping $\M$ in $\I$ is a set of GLAV mapping assertions (or,
  \emph{st-tgds}), where each assertion expresses a correspondence
  between a conjunctive query over the source schema and a conjunctive
  query over the ontology.
\item In the recognition problem, both the query over the source
  schema and the query over the ontology are conjunctive
  queries. Similarly, in the finding problem, the query over the
  source schema is a conjunctive query.
\end{itemize}

\section{The recognition problem for complete s-to-t
  rewritings} \label{sec:RecognitionProblem}

We implicitly refer to the setting described at the end of the
previous section. The \emph{recognition problem} associated to the
complete s-to-t rewriting is the following decision problem: Given an
OBDM specification $\I=\tup{\O,\M,\S}$, a query $Q_{s}$ over the
source schema $\S$, and a query $Q_{g}$ over the ontology $\O$, check
whether $Q_{g}$ is a complete s-to-t rewriting of $Q_{s}$ with respect to
$\I$. The next lemma is the starting point of our solution. 
\begin{lemma} \label{lm:NotComplete} $Q_{g}$ is not a complete s-to-t
  rewriting of $Q_{s}$ with respect to $\I=\tup{\O,\M,\S}$ if and only
  if there is a valuation $v$ of $D_{Q_{s}}$ and a model
  $\B\in\sem^{v(\D_{Q_{s}})}(\I)$ such that
  $v(\tuplehead(Q_{s}))\not\in Q_{g}^{\B}$.
\end{lemma}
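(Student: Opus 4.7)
The plan is to prove both directions by exhibiting the appropriate witness, relying on the standard correspondence between CQ evaluation and homomorphisms from the canonical instance $\D_{Q_s}$, together with monotonicity of CQs (and hence of st-tgd premises) under sub-instances.

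For the ($\Leftarrow$) direction, I would take $\C := v(\D_{Q_s})$ as the source database. Since $\D_{Q_s}$ is the canonical instance associated with $Q_s$ (obtained by treating all variables of $Q_s$ as labeled nulls), the identity homomorphism on $v(\D_{Q_s})$ already witnesses $v(\tuplehead(Q_s)) \in Q_s^\C$. Combined with the assumption that $\B \in \sem^\C(\I)$ and $v(\tuplehead(Q_s)) \notin Q_g^\B$, this yields $Q_s^\C \not\subseteq Q_g^\B$, so $Q_g$ is not a complete s-to-t rewriting of $Q_s$ w.r.t. $\I$.

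For the ($\Rightarrow$) direction, suppose $Q_g$ is not a complete s-to-t rewriting. Then there exist a source database $\C$, a model $\B \in \sem^\C(\I)$, and a tuple $\bar t \in Q_s^\C \setminus Q_g^\B$. By the standard semantics of CQs, $\bar t \in Q_s^\C$ is equivalent to the existence of a homomorphism $h$ from $\D_{Q_s}$ into $\C$ that sends $\tuplehead(Q_s)$ to $\bar t$. I would then define $v$ as the restriction of $h$ to the labeled nulls of $\D_{Q_s}$: $v$ is a valuation of $\D_{Q_s}$, we have $v(\D_{Q_s}) \subseteq \C$, and $v(\tuplehead(Q_s)) = \bar t \notin Q_g^\B$. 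What remains is to observe that the \emph{same} model $\B$ still lies in $\sem^{v(\D_{Q_s})}(\I)$: the condition $\B \models \O$ is unchanged, while $(\B, v(\D_{Q_s})) \models \M$ follows from $(\B, \C) \models \M$ by monotonicity of each st-tgd's CQ premise, i.e., any tuple instantiating a premise $\phi$ in $v(\D_{Q_s})$ also instantiates it in the superset $\C$, so the existential conclusion is already witnessed in $\B$.

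I expect the only technical subtlety to be this last monotonicity step for st-tgd premises under sub-instances; the rest of the argument is essentially a direct unpacking of the homomorphism characterization of CQ answers, so no further obstacle is anticipated.
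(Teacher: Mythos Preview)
Your proposal is correct and follows essentially the same argument as the paper: the $\Leftarrow$ direction uses $v(\D_{Q_s})$ itself as the witnessing source database, and the $\Rightarrow$ direction extracts a valuation from the homomorphism witnessing $\bar t\in Q_s^{\C}$, then shows the same $\B$ remains a model over the smaller source instance by monotonicity of the CQ bodies in the st-tgds. The paper's proof is organized identically, including singling out the monotonicity step for $(\B,v(\D_{Q_s}))\models\M$ as the only point requiring justification.
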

\begin{proof}
  "$\Lola$" Suppose that there exists a valuation $v$ of $\D_{Q_{s}}$
  and a model $\B\in\sem^{v(\D_{Q_{s}})}(\I)$ such that
  $v(\tuplehead(Q_{s}))\not\in Q_{g}^{\B}$. Obviously,
  $v(\tuplehead(Q_{s}))\in Q_{s}^{v(\D_{Q_{s}})}$. It follows that,
  there exist a source database $v(\D_{Q_{s}})$, a model
  $\B\in\sem^{v(\D_{Q_{s}})}(\I)$, and a tuple $v(\tuplehead(Q_{s}))$
  such that $v(\tuplehead(Q_{s}))\not\in Q_{g}^{\B}$ and
  $v(\tuplehead(Q_{s}))\in Q_{s}^{v(\D_{Q_{s}})}$.

  "$\Lora$" Suppose that $Q_{g}$ is not a complete s-to-t rewriting of
  $Q_{s}$ w.r.t. $\I$, i.e., there is a source database $\C$, a model
  $\B\in\sem^{\C}(\I)$, and a tuple $t$ such that $t\in Q_{s}^{\C}$
  and $t\not\in Q_{g}^{\B}$. The fact that $t\in Q_{s}^{\C}$ implies
  the existence of a homomorphism $v:\D_{Q_{s}}\rightarrow\C$ such
  that $v(\tuplehead(Q_{s}))=t$. Note also, that since $\C$ is a
  \emph{ground} instance, $v$ is a valuation of $\D_{Q_{s}}$ such that
  $v(\D_{Q_{s}})\subseteq\C$. Obviously,
  $\B\in\sem^{v(D_{Q_{s}})}(\I)$, this can be seen by considering that
  \myi $\B\models\O$ is true from the supposition that
  $\B\in\sem^{\C}(\I)$; and \myii $(\B,v(\D_{Q_{s}}))\models\M$ is
  true by considering that, $(\B,\C)\models\M$ (which holds from the
  supposition that $\B\in\sem^{\C}(\I)$), $v(D_{Q_{s}})\subseteq\C$,
  and the queries in $\M$ are \emph{monotone} queries. It follows
  that, there is a valuation $v$ of $\D_{Q_{s}}$ and a model
  $\B\in\sem^{v(\D_{Q_{s}})}(\I)$ such that
  $v(\tuplehead(Q_{s}))\not\in Q_{g}^{\B}$. \qed
\end{proof}
Relying on the above lemma, we are now ready to present the algorithm
\textsf{CheckComplete} for the recognition problem. 

\begin{algorithm}[H]
  \caption{CheckComplete($\I$, $Q_{s}$, $Q_{g}$)}\label{alg:CheckComplete}
  \begin{algorithmic}[1]
    \Require{OBDM specification $\I=\tup{\O,\S,\M}$, query $Q_{s}$
      over $\S$, query $Q_{g}$ over $\O$.}  \Ensure{\emph{true} or \emph{false}.}

    \BlankLine

    \State Compute $\D_{Q_{s}}$ from $Q_{s}$ (i.e., the instance,
    possibly with incomplete information, associated to the query
    $Q_{s}$), and denote it with $\D$.

    \State Compute $\A_{\D,\Sigma}$, where $\Sigma=\M\cup\neg
    Q_{\sat}^{1\neq}$, and let $\psi$ be the set of equality applied to the
    variables in $\D$ by the chase.

    \State If $\A_{\D,\Sigma}=\bot$, then return \emph{true}.

    \State If the evaluation of $Q_{\sat}^{0\neq}$ over $\A_{\D,\Sigma}$ considered as a relational database is $\{()\}$ (i.e., $\A_{\D,\Sigma}\models Q_{\sat}^{0\neq}$), then return
    \emph{true}.

    \State If
    $\psi(\tuplehead(Q_{s}))\in\cert(Q_{g},\tup{\O,\A_{\D,\Sigma}})$
    then return \emph{true}, else return \emph{false}.
  \end{algorithmic}
\end{algorithm}
The next theorem establishes the correctness of the above algorithm.
\begin{theorem} \label{thm:CheckComplete} \textsf{CheckComplete}($\I$,
  $Q_{s}$, $Q_{g}$) terminates, and returns \emph{true} if and only if
  $Q_g$ is a complete s-to-t rewriting of $Q_s$ w.r.t. $\I$.
\end{theorem}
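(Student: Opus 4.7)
The plan is to use Lemma~\ref{lm:NotComplete} to rephrase completeness: $Q_g$ is a complete s-to-t rewriting of $Q_s$ w.r.t.~$\I$ iff for every valuation $v$ of $\D_{Q_s}$, either $\sem^{v(\D_{Q_s})}(\I)=\emptyset$, or $v(\tuplehead(Q_s))\in\cert(Q_g,\I,v(\D_{Q_s}))$. The algorithm collapses this universally quantified condition into a single certain-answer check by exploiting the universality of the chase of the non-ground instance $\D_{Q_s}$, combined with the reduction of $\dlliteaid$ query answering to evaluation over an ABox recalled in Section~\ref{sec:preliminaries}.

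Termination is routine: the chase of $\D_{Q_s}$ w.r.t.~$\M$ fires each st-tgd at most once per assignment to its body, and the subsequent chase w.r.t.~the egds $\neg Q_{\sat}^{1\neq}$ creates no new atoms. Steps~4 and~5 reduce, respectively, to UCQ evaluation over a finite instance and to certain answer computation over a $\dlliteaid$ KB, both decidable.

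Soundness is shown by cases. If Step~3 returns true, the chase fails already on the non-ground $\D_{Q_s}$; the same failure cascade occurs after applying any valuation $v$, so $\sem^{v(\D_{Q_s})}(\I)=\emptyset$ for every $v$. If Step~4 returns true, $\A_{\D,\Sigma}\models Q_{\sat}^{0\neq}$, and monotonicity of CQs propagates this to every valuation, so again $\sem^{v(\D_{Q_s})}(\I)=\emptyset$. If Step~5 returns true, then $\psi(\tuplehead(Q_s))\in\cert(Q_g,\tup{\O,\A_{\D,\Sigma}})$; given any valuation $v$ and any $\B\in\sem^{v(\D_{Q_s})}(\I)$, composing $v$ with the universal homomorphism from $\A_{\D,\Sigma}$ into $\B$ yields $v(\tuplehead(Q_s))\in Q_g^{\B}$, which is exactly the condition needed above.

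For the converse, assume the algorithm returns false, so $\A_{\D,\Sigma}$ exists, $\tup{\O,\A_{\D,\Sigma}}$ is satisfiable, and some model $\B$ of $\tup{\O,\A_{\D,\Sigma}}$ omits $\psi(\tuplehead(Q_s))$. Treating the frozen labeled nulls of $\A_{\D,\Sigma}$ as fresh constants induces a valuation $v$ of $\D_{Q_s}$ extending $\psi$; by universality of the chase, $\B\in\sem^{v(\D_{Q_s})}(\I)$ and $v(\tuplehead(Q_s))\notin Q_g^{\B}$, producing the witness required by Lemma~\ref{lm:NotComplete}. The main obstacle I expect is formalizing this universality claim, namely that the chased and frozen ABox $\A_{\D,\Sigma}$ is a generic representative of the whole family $\{\sem^{v(\D_{Q_s})}(\I)\}_{v}$ \emph{after} the equalities recorded in $\psi$ are taken into account; standard results on the chase (cf.~\cite{FKMP03}) and on $\dlliteaid$ query answering cover the two sides separately, but gluing them through the head-tuple substitution $\psi(\tuplehead(Q_s))$ requires care in showing that $\psi$ precisely describes how each valuation $v$ must act on $\tuplehead(Q_s)$ for the witness construction to go through.
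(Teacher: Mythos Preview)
Your proposal is correct and follows essentially the same route as the paper: both reduce via Lemma~\ref{lm:NotComplete} to a condition quantifying over valuations of $\D_{Q_s}$, handle the three ``true'' branches (chase failure, $Q_{\sat}^{0\neq}$ satisfied, certain answer present) by arguing that the corresponding property propagates to every valuation, and construct the ``false'' witness by freezing $\A_{\D,\Sigma}$ into a concrete valuation. The only stylistic difference is that you phrase the Step~5 soundness case abstractly in terms of a universal homomorphism from $\A_{\D,\Sigma}$ into an arbitrary model $\B$, whereas the paper argues more operationally by re-chasing $v(\D)$ and observing that the resulting ABoxes inherit the certain answer; these are two descriptions of the same chase-universality fact, and your identification of the $\psi$-tracking as the delicate point is exactly where the paper, too, leaves the argument at the level of ``it is easy to see''.
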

\begin{proof-sketch}
Termination of the algorithm easily follows by the termination of the chase procedure, and by the obvious termination of computing the certain answers of a CQ over $\tup{\O,\A_{\D,\Sigma}}$.

For the "$\Lora$" direction, suppose that the algorithm returns false, i.e., $\A_{\D,\Sigma}\not\models Q_{\sat}^{0\neq}$, and $\psi(\tuplehead(Q_{s}))\not\in\cert(Q_{g},\tup{\O,\A_{\D,\Sigma}})$. Now, if we extend $\psi(\D)$ by considering the freezing of this instance (i.e., variables are now considered as constants), it is easy to see that we obtain a valuation $v$ of $\D$ such that $(v(\D),\A_{\D,\Sigma})\models\M$, and such that $\sem^{v(\D)}(\I)\neq\emptyset$. Moreover, the fact that $\psi(\tuplehead(Q_{s}))\not\in\cert(Q_{g},\tup{\O,\A_{\D,\Sigma}})$, implies, by the property of certain answers, that there is at least one model $\B\models\tup{\O,\A_{\D,\Sigma}}$, and hence $\B\in\sem^{v(\D)}(\I)$ (because $(v(\D),\A_{\D,\Sigma})\models\M$) such that $v(\tuplehead(Q_{s}))\not\in Q_{g}^{\B}$. It follows, from Lemma \ref{lm:NotComplete}, that $Q_{g}$ is not a complete s-to-t rewriting of $Q_{s}$ w.r.t. $\I$.

For the "$\Lola$" direction, in the cases that $\A_{\D,\Sigma}=\bot$ or $\A_{\D,\Sigma}\models Q_{\sat}^{0\neq}$, it is easy to see that for every valuation $v$ of $\D$, either the chase of $v(\D)$ will fail, or every ABox $\A$ such that $(v(\D),\A)\models\M$ and $\A\models\neg Q_{\sat}^{1\neq}$ will be such that $\A\models Q_{\sat}^{0\neq}$, implying that, for every valuation $v$ of $\D$, $\sem^{v(\D)}=\emptyset$. It follows, from Lemma \ref{lm:NotComplete}, that in this case $Q_{g}$ is a complete s-to-t rewriting of $Q_{s}$ w.r.t. $\I$. While, in the cases that $\psi(\tuplehead(Q_{s}))\in\cert(Q_{g},\tup{\O,\A_{\D,\Sigma}})$, it is easy to see that, for every valuation $v$ of $\D$ either $\sem^{v(\D)}=\emptyset$, or if we compute $\A_{v(\D),\Sigma}$, we have that $v(\tuplehead(Q_{s}))\in\cert(Q_{g},\tup{\O,\A_{v(\D),\Sigma}})$. More generally, every $\A$ obtained by chasing $v(\D)$ w.r.t. $\M$ and $\neg Q_{\sat}^{1\neq}$, and then choosing arbitrary constants for the possible remaining variables, is such that $v(\tuplehead(Q_{s}))\in\cert(Q_{g},\tup{\O,\A})$. Hence, for every model $\B$ such that $\B\models\tup{\O,\A}$, we have that $v(\tuplehead(Q_{s}))\in Q_{g}^{\B}$. Also, we observe that the set of models $\sem^{v(\D)}$ coincides with the set of all models $\B$ such that $\B\models\tup{\O,\A}$ for all the possible ABox $\A$ obtained using the above procedure. It follows that, for every possible valuation $v$ of $\D$ and for every possible $\B\in\sem^{v(\D)}(\I)$, we have that $v(\tuplehead(Q_{s}))\in Q_{g}^{\B}$, implying, from Lemma \ref{lm:NotComplete}, that also in this case $Q_{g}$ is a complete s-to-t rewriting of $Q_{s}$ w.r.t. $\I$. \qed
\end{proof-sketch}
As for complexity issues of the algorithm, we observe: \myi it runs in $\PTIME$ in the size of $Q_{s}$. Indeed, computing $\D$ (the instance associated to the query $Q_{s}$) can be done in linear time, and chasing an instance in the presence of a weakly acyclic set of tgds (as in our case) is $\PTIME$ in the size of $\D$ ($\M$ and $\Sigma$ are considered fixed); \myii it runs in $\PTIME$ in the size of $\O$. Indeed, $Q_{\sat}$ and the evaluation of the certain answers of $Q_{g}$ can be both computed in $\PTIME$ in the size of $\O$; \myiii it runs in $\EXPTIME$ in the size of $\M$. This can be seen from the obvious $\EXPTIME$ process of transferring data from $\D$ to $\A_{\D,\Sigma}$; \myiv the problem is $\NP$-complete in the size of $Q_{g}$ because computing the certain answers of a UCQ query is $\NP$-complete in the size of the query (\emph{query complexity}).

\section{Finding optimal complete s-to-t
  rewritings}\label{sec:finding}

In this section we study the problem of finding optimal complete
s-to-t rewritings. The first question to ask is which rewriting we
chose in the case where several complete rewritings exist. The obvious
choice is to define the notion of ``optimal'' complete s-to-t
rewriting: one such rewriting $r$ is optimal if there is no complete
s-to-t rewriting that is contained in $r$. In order to formalize this
notion, we introduce the following definitions (where $\MOD(\O)$ denotes the set of models of $\O$). 
\begin{definition}
  \normalfont $Q_{g}$ is \emph{contained} in $Q'_{g}$ with respect to
  $\O$, denoted $Q_{g}\subseteq_{\O}Q'_{g}$, if for every model
  $\B\in\MOD(\O)$ we have that ${Q_{g}}^{\B}\subseteq{Q'_{g}}^{\B}$.
  $Q_{g}$ is \emph{proper contained} in $Q'_{g}$ with respect to $\O$,
  denoted $Q_{g}\subset_{\O}Q'_{g}$, if $Q_{g}\subseteq_{\O}Q'_{g}$
  and for at least one model $\B\in\MOD(\O)$ we have that
  ${Q_{g}}^{\B}\subset{Q'_{g}}^{\B}$.
\end{definition}
\begin{definition}
  \normalfont $Q_{g}$ is an \emph{optimal complete s-to-t rewriting}
  of $Q_{s}$ with respect to $\I$, if $Q_{g}$ is a complete s-to-t
  rewriting of $Q_{s}$ with respect to $\I$, and there exists no query
  $Q'_{g}$ such that $Q'_{g}$ is a complete s-to-t rewriting of
  $Q_{s}$ with respect to $\I$ and $Q'_{g}\subset_{\O}Q_{g}$.
\end{definition}
We are ready to present an algorithm for computing an optimal complete
s-to-t rewriting of a query over the source schema.
\begin{algorithm}
  \caption{FindOptimalComplete($\I$,
    $Q_{s}$)}\label{alg:FindUniqueOptimalComplete}
  \begin{algorithmic}[1]
    \Require{OBDM specification $\I=\tup{\O,\S,\M}$, CQ $Q_{s}$ over
      $\S$.}  

    \Ensure{query $Q_g$ over $\O$.}

    \BlankLine 

    \State Compute $\D_{Q_{s}}$ from $Q_{s}$ (i.e., the instance,
    possibly with incomplete information, associated to the query
    $Q_{s}$), and denote it with $\D$.

    \State Chase $\D$ w.r.t. $\M$ to produce an instance $J'$.  

    \State Chase $J'$ w.r.t. $\neg Q_{\sat}^{1\neq}$; if the chase
    fails, then \emph{stop} and return the query $\{\tuplehead(Q_{s})\
    |\ \bot(\tuplehead(Q_{s}))\}$; otherwise, let $J$ be the instance
    produced, and let $\psi$ be the set of equality applied to the
    variables in $\D$ by the chase.

    \State Evaluate $Q_{\sat}^{0\neq}$ over $J$; if the answer is
    $\{()\}$ (i.e., $J\models Q_{\sat}^{0\neq}$), then \emph{stop} and
    return the query $\{\tuplehead(Q_{s})\ |\
    \bot(\tuplehead(Q_{s}))\}$.  

    \State If $J=\emptyset$ (i.e., no atoms in the instance $J$), then
    \emph{stop} and return the query $\{\tuplehead(Q_{s})\ |\
    \top(\tuplehead(Q_{s}))\}$; otherwise, let $Q_{J}$ be the boolean
    conjunctive query associated to the instance $J$.  

    \State Let $\overline{w}$ be the tuple composed by all terms
    in $\psi(\tuplehead(Q_{s}))$ not appearing in $J$. If such tuple
    is not empty, then return $\{\psi(\tuplehead(Q_{s}))\ |\
    Q_{J}(\psi(\tuplehead(Q_{s}))\wedge\top(\overline{w})\}$;
    otherwise, return $\{\psi(\tuplehead(Q_{s}))\ |\
    Q_{J}(\psi(\tuplehead(Q_{s})))\}$.
	\end{algorithmic}
\end{algorithm}
For the termination and the complexity of this algorithm hold the same considerations done for the termination and the complexity of the \textsf{CheckComplete} algorithm. In particular, \textsf{FindOptimalComplete}($\I$,$Q_{s}$) terminates, and it runs in \myi $\PTIME$ in the size of $Q_{s}$; \myii $\PTIME$ in the size of $\O$; \myiii $\EXPTIME$ in the size of $\M$. Whereas, the correctness is established by the next theorem.
\begin{theorem} \textsf{FindOptimalComplete}($\I$, $Q_{s}$) returns an optimal complete s-to-t rewriting of $Q_s$ w.r.t. $\I$.
\end{theorem}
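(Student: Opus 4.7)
The plan is to establish two things separately: (a) the query $Q_g$ returned by \textsf{FindOptimalComplete} is a complete s-to-t rewriting of $Q_s$ w.r.t.\ $\I$, and (b) no complete s-to-t rewriting of $Q_s$ is properly contained (in the sense of $\subset_{\O}$) in $Q_g$. Part (a) follows by running through the four possible return branches and verifying that each output satisfies the completeness criterion of Theorem~\ref{thm:CheckComplete}. In lines~3 and~4, the trivial rewriting $\{\tuplehead(Q_s)\ |\ \bot(\tuplehead(Q_s))\}$ is complete because the same chase conditions that cause the failure (or that produce $J \models Q_{\sat}^{0\neq}$) force $\sem^{v(\D)}(\I) = \emptyset$ for every valuation $v$ of $\D$, exactly as in the $\Lola$ direction of Theorem~\ref{thm:CheckComplete}. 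In line~5 the query $\{\tuplehead(Q_s)\ |\ \top(\tuplehead(Q_s))\}$ is vacuously complete. For line~6, the output query is designed so that $\psi(\tuplehead(Q_s)) \in \cert(Q_g,\tup{\O,\A_{\D,\Sigma}})$ via the identity homomorphism from $J$ to itself (since $Q_J$ is exactly the boolean CQ associated to $J$), which matches the sufficient condition for completeness used in \textsf{CheckComplete}.

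For part (b), suppose towards a contradiction that there is a complete s-to-t rewriting $Q'_g$ with $Q'_g \subset_{\O} Q_g$. Then some $\B_0 \in \MOD(\O)$ and some tuple $t$ satisfy $t \in Q_g^{\B_0} \setminus {Q'_g}^{\B_0}$. By the shape of $Q_g$ in line~6, this gives a homomorphism $h$ from the body $Q_J$ into $\B_0$ sending $\psi(\tuplehead(Q_s))$ to $t$. The key construction is to use $h$ to build a source database $\C$ such that $\B_0 \in \sem^{\C}(\I)$ and $t \in Q_s^{\C}$; completeness of $Q'_g$ would then force $t \in {Q'_g}^{\B_0}$, a contradiction. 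Define a valuation $v$ of $\D$ by $v(x) := h(\psi(x))$ for each null $x$ of $\D$ whose image $\psi(x)$ appears in $J$, and set $v(x)$ to any constant of $\B_0$ witnessing the corresponding $\top$-atom otherwise (this is precisely the role of the tuple $\overline{w}$ in line~6). Taking $\C := v(\D)$, we have $v(\tuplehead(Q_s)) = t$ by construction, hence $t \in Q_s^{\C}$; and because $J$ is produced by chasing $\D$ w.r.t.\ $\M$ and $\neg Q_{\sat}^{1\neq}$, the fact that $h$ maps $J$ into $\B_0$ together with $\B_0 \models \O$ implies $(\B_0,\C) \models \M$, so $\B_0 \in \sem^{\C}(\I)$.

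The main obstacle, as is typical with canonical-model constructions, will be checking that the valuation $v$ is well-defined and that $\B_0$ really witnesses a model for the source database just built. Specifically, when the chase w.r.t.\ $\neg Q_{\sat}^{1\neq}$ equates two nulls, or equates a null with a constant, one must verify that $h$ respects the identifications recorded in $\psi$; this reduces to the observation that the egds driving $\psi$ are logical consequences of $\neg Q_{\sat}^{1\neq}$ and hence are satisfied in any model of $\O$. One must also ensure that the auxiliary positions represented by $\overline{w}$ can be populated in $\B_0$, which holds because these positions are only constrained by $\top$-atoms in $Q_g$. Once these bookkeeping details are dispatched, (a) and (b) together yield that $Q_g$ is an optimal complete s-to-t rewriting, as claimed.
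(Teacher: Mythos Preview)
Your proposal follows essentially the same approach as the paper's proof sketch: completeness is established by verifying that \textsf{CheckComplete} would return \emph{true} on the output $Q_g$ (appealing to Theorem~\ref{thm:CheckComplete}), and optimality is established by taking a witness $(\B_0,t)$ of $Q'_g\subset_\O Q_g$, lifting the homomorphism $h$ that certifies $t\in Q_g^{\B_0}$ to a valuation $v$ of $\D$, and showing $\B_0\in\sem^{v(\D)}(\I)$ with $t\in Q_s^{v(\D)}$. The paper phrases the latter step as ``apply the same valuation $v$ to $J$ to obtain $J'\subseteq\B$ with $(v(\D),J')\models\M$,'' which is exactly your argument via $h$; you are simply more explicit about the bookkeeping (well-definedness of $v$ under $\psi$, and the role of $\overline{w}$), whereas the paper wraps these in ``it is easy to see.'' One small omission in your construction: you handle nulls of $\D$ whose $\psi$-image lies in $J$ and those in $\overline{w}$, but you should also spell out (as the paper does) that nulls of $\D$ appearing neither in $J$ nor in the head are sent to \emph{fresh} constants, which is what prevents spurious new matches of st-tgd bodies in $v(\D)$.
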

\begin{proof-sketch}
When the algorithm returns the query $\{\tuplehead(Q_{s})\ |\ \bot(\tuplehead(Q_{s}))\}$, it is easy to see that, regardless of which is the query $Q_{g}$, if we run the algorithm \textsf{CheckComplete}($\I$,$Q_{s}$,$Q_{g}$) it returns \emph{true} (also in this case, either the chase will fail, or the ABox $\A_{\D,\Sigma}$ produced will satisfy $Q_{\sat}^{0\neq}$), and hence, by Theorem \ref{thm:CheckComplete}, $Q_{g}$ is a complete s-to-t rewriting of $Q_{s}$ w.r.t. $\I$. It follows that, also $\{\tuplehead(Q_{s})\ |\ \bot(\tuplehead(Q_{s}))\}$ is a complete s-to-t rewriting, and, by definition of such query, it is an optimal complete s-to-t rewriting of $Q_{s}$ w.r.t. $\I$.

When the algorithm returns the query $Q_{g}=\{\psi(\tuplehead(Q_{s}))\ |\ Q_{J}(\psi(\tuplehead(Q_{s}))\wedge\top(\overline{w})\}$ (or $\{\tuplehead(Q_{s})\ |\ \top(\tuplehead(Q_{s}))\}$, in the case $J=\emptyset$), if we run the algorithm \textsf{CheckComplete}($\I$,$Q_{s}$,$Q_{g}$), it computes the ABox $\A_{\D,\Sigma}$, where $\tuplehead(Q_{s})\in\cert(Q_{g},\tup{\O,\A_{\D,\Sigma}})$ holds because $Q_{g}$ corresponds exactly to $\A_{\D,\Sigma}$ (before to be freezed) extended with $\top(\overline{w})$ for all terms $\overline{w}$ in $\psi(\tuplehead(Q_{s}))$ not appearing in $\A_{\D,\Sigma}$. It follows that, also in this case, \textsf{CheckComplete}($\I$,$Q_{s}$,$Q_{g}$) returns \emph{true}, implying, from Theorem \ref{thm:CheckComplete}, that $Q_{g}$ is a complete s-to-t rewriting of $Q_{s}$ w.r.t. $\I$. 

We now prove that the query $Q_{g}=\{\psi(\tuplehead(Q_{s}))\ |\ Q_{J}(\psi(\tuplehead(Q_{s}))\wedge\top(\overline{w})\}$ (or $\{\tuplehead(Q_{s})\ |\ \top(\tuplehead(Q_{s}))\}$, in the case $J=\emptyset$) is also an optimal complete s-to-t rewriting of $Q_{s}$ w.r.t. $\I$. In particular, suppose that there exist a query $Q'_{g}$ such that $Q'_{g}\subset_{\O}Q_{g}$, i.e., $Q'_{g}\subseteq_{\O}Q_{g}$, and there is a model $\B\in\sem^{\C}(\I)$ and a tuple $t$ such that $t\in Q_{g}^{\B}$ and $t\not\in {Q'_{g}}^{\B}$. The fact that $t\in Q_{g}^{\B}$ implies the existence of a valuation $v$ to all the variables in $Q_{g}$ that makes $Q_{g}$ true in $\B$. Note that, we can extend the valuation $v$ by assigning a new fresh constant to every variable appearing in $\D$ and not appearing in $Q_{g}$. The valuation $v$ obtained is now a valuation for $\D$, and obviously $t\in Q_{s}^{v(\D)}$. Moreover, if we apply the same valuation $v$ to the instance $J$, it is easy to see that we obtain a ground instance $J'$ such that $(v(\D),J')\models\M$ (we recall that $Q_{g}$ is the CQ associated to the instance $J$). Obviously, $J'\subseteq\B$, and hence, $(v(\D),\B)\models\M$ holds because queries in the mapping $\M$ are \emph{monotone} queries. Moreover, we also have that $\B\in\sem^{v(\D)}(\I)$ (the fact that $\B\models\O$ holds from the initial supposition). Hence, for the source database $v(\D)$ there is a model $\B\in\sem^{v(\D)}(\I)$ and a tuple $t$ such that $t\in Q_{s}^{v(\D)}$ and $t\not\in {Q'_{g}}^{\B}$, implying that, $Q'_{g}$ is not a complete s-to-t rewriting of $Q_{s}$ w.r.t. $\I$. \qed
\end{proof-sketch}
It is easy to prove that the query returned by the algorithm is not only an optimal complete s-to-t rewriting of $Q_{s}$ w.r.t. $\I$, but it is also the unique (up to equivalence) optimal complete s-to-t rewriting of $Q_{s}$ w.r.t. $\I$. Furthermore, the above result implies that an optimal complete s-to-t rewriting of $Q_{s}$ w.r.t. $\I$ can always be expressed as a CQ.

\section{Conclusion}

We have introduced the notion of Ontology-based Open Data Publishing,
whose idea is to use an OBDM specification as a basis for carrying out
the task of publishing high-quality open data. 

In this paper, we have focused on the bottom-up approach to
ontology-based open data publishing, we have introduced the notion of
source-to-target rewriting, and we have developed algorithms for two
problems related to complete source-to-target rewritings, namely the
recognition and the finding problem. We plan to continue our work on
several directions. In particular, we plan to investigate the notion
of sound rewriting under different semantics. Also, we want to study
the top-down approach, especially with the goal of devising techniques
for deriving which intensional knowledge to associate to datasets in
order to document their content in a suitable way.

\newpage
\bibliographystyle{abbrv}
\bibliography{bibliography}

\end{document}